\documentclass[12pt]{article}

\usepackage[margin=1in]{geometry}
\usepackage{amsmath,amssymb,amsthm,mathtools,bm}
\usepackage{booktabs}
\usepackage{array}
\usepackage{microtype}
\usepackage[backend=biber,style=authoryear,maxcitenames=2,maxbibnames=99,natbib=true]{biblatex}
\usepackage{hyperref}
\hypersetup{hidelinks}
\usepackage{graphicx}
\usepackage{xcolor}
\usepackage{placeins}
\usepackage{setspace}
\newtheorem{theorem}{Theorem}

\newcommand{\tr}{\operatorname{tr}}

\newcommand{\norm}[1]{\left\lVert #1\right\rVert}

\title{\bfseries Longitudinal Brain Aging in Covariance Space:\\Temporally Calibrated Bayesian Inference for Adaptive Covariance Operators}
\author{
Montserrat Fuentes, PhD\\
St. Edward's University
\and
Yi Zhao, PhD\\
Department of Biostatistics and Health Data Science\\
Indiana University
}
\date{}

\begin{document}
\maketitle

\begin{center}
\end{center}
\vspace{0.5em}

\begin{abstract}
Functional brain aging is usually described from differences between people of different ages, even though aging itself unfolds within the same individual. This distinction becomes especially important for functional connectivity, where the scientific object is a high-dimensional covariance system and the direction of reproducible change may itself evolve across adulthood. We introduce a novel Bayesian longitudinal covariance-operator framework that treats each visit covariance as a latent positive-definite state informed directly by repeated resting-state fMRI runs. The proposed approach combines several innovations: run-level information is calibrated for temporal dependence and replicate variability, within-person aging is separated from between-person age differences, and an adaptive tangent representation learns the low-dimensional structure of systematic longitudinal change without selecting directions from pooled covariance variation. The longitudinal operator is allowed to change smoothly in magnitude and orientation with age, and exact Fr\'echet derivatives translate the fitted trajectory back to the covariance scale. This construction makes it possible to study questions that are not addressed by standard cross-sectional analyses or fixed-projection longitudinal approaches: how the covariance organization of the same brain changes with age, when that reorganization is most pronounced, whether it follows the population age gradient, and which functional systems account for agreement or divergence. In the Dallas Lifespan Brain Study, within-person covariance aging differs markedly from the cross-sectional age gradient through early and middle adulthood, with the discrepancy weakening later in life. Reorganization is strongest around midlife, the systematic longitudinal signal is concentrated in only a few coordinated directions, and most of its off-diagonal energy is expressed between functional systems. The results show that population-level signatures of brain aging do not necessarily reflect the trajectory followed by an individual brain and demonstrate the value of modeling longitudinal covariance change directly as a biological process.
\end{abstract}

\section{Introduction}
\label{sec:intro}

Healthy brain aging is increasingly understood as a reorganization of interacting neural systems rather than a uniform loss of connectivity. Resting-state functional magnetic resonance imaging (rs-fMRI) has made this organization observable through spontaneous correlations among distributed brain regions. Across adulthood, changes have been documented in long-range connectivity, network integration, modular structure, and the balance between within-system and between-system coupling \citep{damoiseaux2008,betzel2014,salallonch2014,geerligs2015}. More recent studies continue to show that aging is associated with changes in the organization and differentiation of large-scale functional systems rather than a simple global reduction in connectivity \citep{setton2023age,martin2023age,han2024measures,calder2026high}. This broader view is consistent with neurocognitive theories in which adaptation and compensatory reorganization accompany decline \citep{parkreuter2009,reuterpark2014}. The relevant scientific object is therefore not a collection of isolated connections, but the organization of dependence across the brain as a whole.

The Dallas Lifespan Brain Study (DLBS) has played an important role in shaping this perspective. Cross-sectional analyses showed that younger adults exhibit stronger functional-system segregation, whereas increasing age is associated with less differentiated large-scale organization \citep{chan2014,chan2017}. These findings, together with more recent evidence of age-related reorganization across functional systems \citep{setton2023age,martin2023age,han2024measures,calder2026high}, make the distinction between population age differences and individual aging increasingly important. A pattern that distinguishes older from younger adults is not necessarily the path followed by the same brain as it ages. Population age contrasts combine longitudinal change with stable differences among people and with cohort structure. In a high-dimensional covariance system, this distinction can involve more than a difference in effect size: the direction of covariance change within a person may differ from the direction that separates people of different ages. A cross-sectional signature may therefore characterize how older and younger individuals differ without describing how an individual brain changes over time.

Longitudinal rs-fMRI studies provide evidence that functional organization changes within individuals, including changes in connectivity, segregation, modularity, and network configuration \citep{ng2016,oschmann2020,malagurski2020}. Yet most longitudinal analyses reduce each scan to selected edges, predefined network summaries, graph measures, or a small number of connectivity features. Those summaries are valuable for focused questions, but they do not describe the evolution of the covariance state itself. They cannot determine whether the dominant covariance direction of within-person aging is stable across adulthood, whether the rate of whole-system reorganization changes with age, or whether the longitudinal trajectory agrees geometrically with the population age gradient.

Statistical methods for covariance outcomes provide important foundations for addressing these questions. Matrix-logarithmic and geometric approaches preserve positive-definite structure \citep{chiu1996,yuan2008}, while covariance regression and Bayesian factor models allow covariates to alter dependence structure \citep{hoffniu2012,foxdunson2015}. Covariate Assisted Principal regression introduced a way to learn covariance directions associated with predictors rather than selecting them from pooled variation alone \citep{zhao2021}, and its longitudinal extension accommodates repeated covariance outcomes through a learned projection \citep{zhao2024}. Recent developments have broadened this literature further. Bayesian CAP provides posterior inference for covariance-associated directions \citep{park2025bayesian}, while multilevel covariance regression introduces hierarchical structure for repeated and clustered covariance outcomes \citep{orindi2023combined}. MCAP allows covariance-associated projection directions to vary across groups while sharing information across them, with a cross-sectional lifespan application that identifies population subnetworks that differ across age \citep{green2026}. Latent-space formulations provide another strategy for representing high-dimensional covariance structure parsimoniously \citep{wang2026latent}.

These methods provide a foundation for studying covariance variation, but our focus is the continuous trajectory of the covariance within individuals. Unlike approaches based on a fixed covariance-associated projection, we allow the direction of longitudinal change to evolve with age. The geometric and neural-field modes studied by Robinson and Pang \citep{robinson2016eigenmodes,pang2023geometric} arise from anatomical or dynamical operators. Our covariance modes instead describe dependence among BOLD signals and are estimated from repeated measurements.

The visit-specific covariance is inferred from finite BOLD time series rather than observed directly. Filtering, temporal dependence, and censoring affect the information available from each run. Repeated runs also vary around the underlying visit covariance. Our model carries this uncertainty into inference on longitudinal change rather than treating an estimated covariance matrix as error-free.

High dimensionality creates a related modeling challenge. With 100 cortical regions, a visit covariance contains 5,050 distinct entries. Regularization is unavoidable, but the source of the low-dimensional structure matters. Directions that dominate pooled covariance variation may reflect stable individual heterogeneity, overall amplitude, or nuisance variation rather than reproducible aging. We therefore place adaptive dimension reduction on the systematic within-person aging component while allowing baseline covariance and individual heterogeneity to remain substantially richer. The learned subspace is tied to the longitudinal scientific question rather than inherited from an unsupervised decomposition of total variation.

A Bayesian hierarchy links repeated runs to a latent positive-definite covariance state at each visit. It then learns the longitudinal trajectory and the uncertainty in its direction, magnitude, and effective dimension. Age is decomposed into within-person change around each participant's mean observed age and between-person location in the population age distribution. Separate operators represent these two sources of age variation, so their similarity is learned rather than assumed. The within-person operator varies smoothly with age inside an adaptive tangent span, allowing the covariance-aging direction to reorient across adulthood. Exact Fr\'echet derivatives then translate the tangent-space model to covariance scale, where the scientific questions are naturally expressed.

The fitted trajectory describes the rate and direction of within-person covariance change and its agreement with the cross-sectional age gradient. Network-resolved summaries show which systems contribute to this agreement or divergence without requiring separate tests for thousands of edges.

The DLBS provides a particularly informative setting because the study has helped establish cross-sectional functional segregation as a hallmark of healthy aging and now contains repeated observations spanning much of adulthood. The longitudinal data allow that influential population-level picture to be reconsidered as a within-person biological process. Agreement between the longitudinal covariance trajectory and the cross-sectional gradient would support the interpretation that population age signatures trace individual aging. Divergence would show that the two describe different aspects of lifespan variation and would identify when and where that distinction is strongest.

The remainder of the paper follows the analysis from imaging data to biological interpretation. Section~\ref{sec:data} introduces the DLBS observations and defines the scientific objective. Section~\ref{sec:model} describes how repeated runs inform a longitudinal covariance model that separates changes within individuals from differences across individuals. Section~\ref{sec:method} explains how the fitted trajectory is summarized in biologically interpretable measures. Section~\ref{sec:inference} presents the Bayesian approach and the positive-definiteness property. Section~\ref{sec:dlbs} reports the DLBS findings, and Section~\ref{sec:discussion} considers their implications. Mathematical details, the proof, and simulation results are provided in the Supplementary Materials.

\section{Data and scientific target}
\label{sec:data}

The Dallas Lifespan Brain Study was designed to characterize neural and cognitive change across adulthood. The full study enrolled 464 adults aged 21--89 years at the initial assessment. Follow-up waves were conducted approximately every 3.5--5 years, with 338 participants returning for a second epoch and 224 for a third \citep{park2025}. The broad baseline age distribution is important for the present analysis because repeated measurements begin at many points in adulthood rather than being concentrated only in late life.

Our methodological analysis uses 50 participants selected to provide broad lifespan coverage while retaining three repeated visits. The resulting data contain 150 participant-visits and 252 resting-state runs. Across all observed visits, ages range from 22 to 96 years. The cohort contains 28 women and 22 men. Most participants contribute five runs across the three visits, while two contribute six. The resting-state acquisition used a repetition time (TR) of 2 s throughout the study. Epoch 1 included a 154-volume resting-state run, whereas epochs 2 and 3 included two runs per visit; the run duration was increased from 154 to 180 volumes partway through epoch 2 \citep{park2025}. Thus, the difference between 154 and 180 original volumes reflects a change in run duration rather than temporal sampling.
After preprocessing and framewise-displacement censoring, retained run length ranges from 89 to 180 volumes, with a mean of 156.

All BOLD images were preprocessed with fMRIPrep \citep{esteban2019fmriprep} and analyzed in the \texttt{MNI152NLin2009cAsym} template space. Regional signals were extracted with the Schaefer 2018 cortical atlas \citep{schaefer2018} using 100 parcels grouped into seven canonical networks. The atlas was obtained directly in the same nonlinear template space and at 2-mm resolution, avoiding cross-template interpolation of the label image. Regional series were linearly detrended and band-pass filtered between 0.01 and 0.10 Hz. Nuisance regression included the six rigid-body motion parameters together with their temporal derivatives, squares, and squared derivatives, plus white-matter and cerebrospinal-fluid signals. Volumes with framewise displacement above 0.5 mm were censored. Runs remained separate throughout the analysis rather than being concatenated within visit.

The primary representation uses percent signal change (PSC), rather than standardizing each parcel to unit variance. This choice is important for the present scientific target because unit-variance standardization removes regional amplitude information by construction and produces a correlation-like representation. PSC preserves relative BOLD fluctuation amplitude, allowing the covariance model to distinguish changes in marginal variability from changes in dependence structure. A z-standardized representation is retained as a prespecified sensitivity analysis to assess whether the main conclusions depend on amplitude information.

Quality-control diagnostics indicated that residual run-level motion burden remained associated with features of the estimated covariance matrices after preprocessing and censoring. We therefore include run-level motion as a nuisance covariate in the longitudinal covariance model to reduce the possibility that residual motion-related variation is attributed to age. This model-based adjustment is complementary to preprocessing and does not reintroduce motion into the BOLD time series. Robustness was further assessed using stricter censoring and exclusion of high-motion observations.

Figure~\ref{fig:dlbs_brain} connects the imaging representation to the statistical object estimated by the model. Panel A shows the Schaefer-100 cortical parcellation organized into seven functional networks. Each visit therefore contributes repeated multivariate BOLD observations on the same 100-region cortical system. The goal is not to estimate pairwise covariance changes independently, but to characterize how the covariance system changes as a coordinated whole.

Biologically, the covariance-aging operator describes this coordinated pattern of change. At a given age, it represents how the fitted covariance structure changes with an incremental increase in age. Individual entries describe changes in covariance between particular regions, but the operator retains those changes as one multivariate object. An edge-by-edge analysis treats those changes as separate effects, whereas the operator-valued analysis characterizes their joint direction, magnitude, and effective dimensionality.

Panel B of Figure~\ref{fig:dlbs_brain} combines posterior summaries from the fitted model at representative ages with a schematic network-level illustration of within-person covariance aging. Node colors correspond to the same seven functional networks shown in Panel A, and the network anchors are positioned schematically to reflect their approximate anatomical locations. The displayed connections are schematic rather than parcel-level posterior edge estimates; within the illustration, blue and red connections denote negative and positive covariance-scale change, respectively. The accompanying posterior summaries quantify how the organization of longitudinal covariance change evolves across adulthood.

\begin{figure}[!htbp]
\centering
\includegraphics[width=\textwidth]{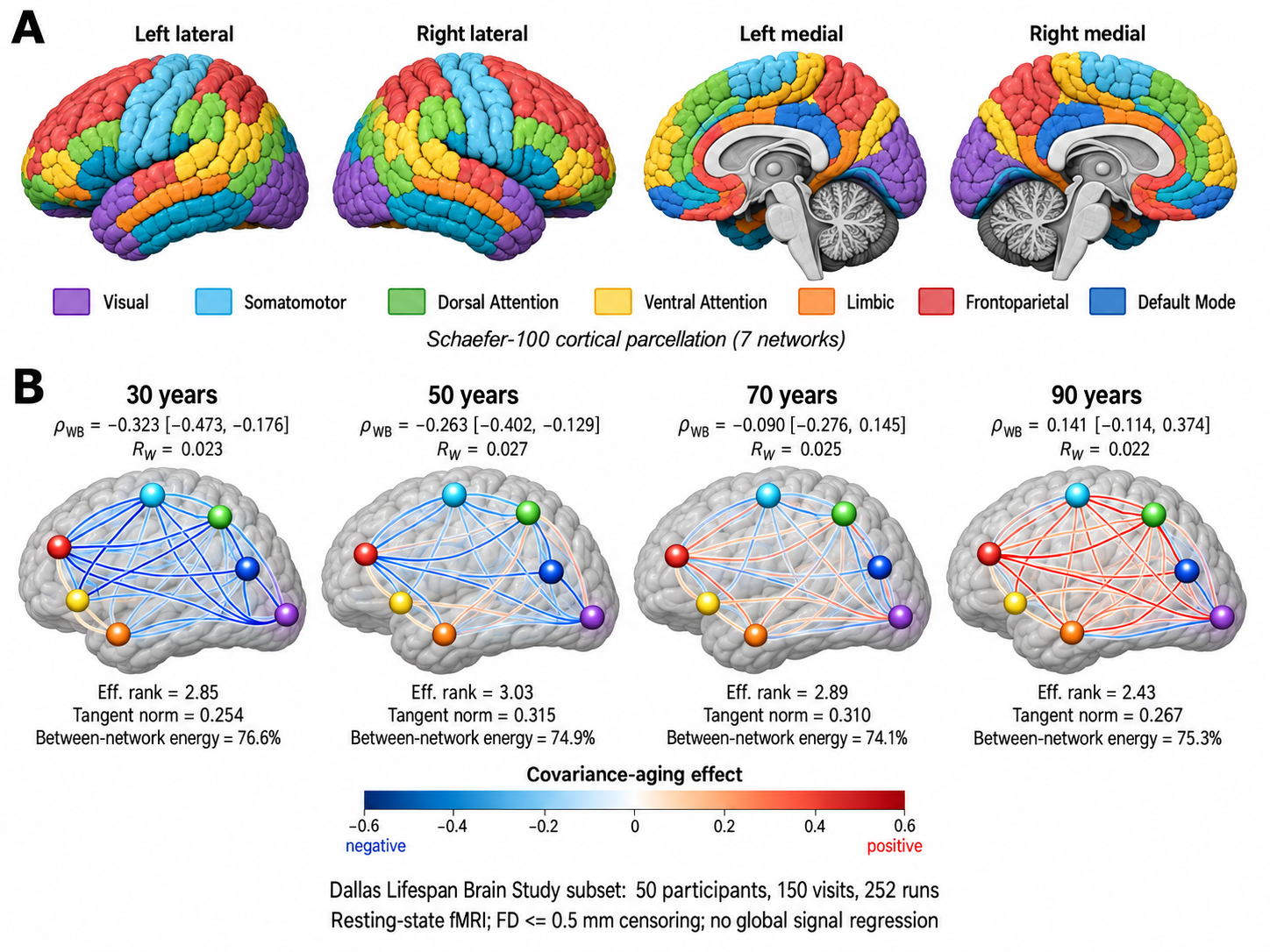}
\caption{\textbf{Cortical domain and longitudinal covariance aging in the DLBS.}
(A) Schaefer-100 cortical parcellation grouped into seven functional networks.
(B) Posterior summaries at ages 30, 50, 70, and 90 years shown alongside a schematic network-level representation of within-person covariance aging. Node colors match the networks in Panel A; node positions and displayed connections are schematic rather than parcel-level posterior estimates. Posterior summaries quantify within- versus between-person alignment, reorganization rate, effective rank, and between-network energy.}
\label{fig:dlbs_brain}
\end{figure}

The figure also illustrates why the distinction between cross-sectional and longitudinal aging is made at the operator level. Aging may weaken covariance within one network while strengthening coupling between others, or reorganize dominant spatial modes without producing a uniform change in connectivity. These patterns can correspond to distinct directions of whole-system covariance change even when simpler connectivity summaries are similar. The method developed below therefore targets the direction, magnitude, and effective dimensionality of this coordinated reorganization.

The scientific target can now be stated precisely. Let $a_{ij}$ denote age for participant $i$ at visit $j$, and let
\[
\bar a_i = \frac{1}{n_i}\sum_{j=1}^{n_i} a_{ij}
\]
denote that participant's mean observed age. We decompose age as
\[
a_{ij} = \bar a_i + w_{ij},
\qquad
w_{ij}=a_{ij}-\bar a_i.
\]
The participant mean $\bar a_i$ captures between-person location in the age distribution, while $w_{ij}$ captures within-person change around that participant's observed age center. The use of $\bar a_i$ is a statistical centering device rather than a biological reference age. A negative value of $w_{ij}$ simply indicates a visit occurring before that participant's observed midpoint, and $w_{ij}$ still increases one-for-one with chronological age. Thus, the direction of the within-person age effect retains its usual chronological interpretation.

Person-mean centering also provides a direct decomposition of observed age into between-person and within-person components. Using baseline age $a_{i1}$ instead would anchor the decomposition to study entry, making the between-person component depend on the particular visit designated as baseline and forcing all subsequent within-person values to be positive. Centering at $\bar a_i$ instead uses all observed visits to define each participant's age location and places the repeated measurements around zero, providing a more symmetric separation of longitudinal change from between-person age differences.

The central question is whether the covariance change associated with increasing $w_{ij}$ follows the same direction as the covariance difference associated with increasing $\bar a_i$. We also ask whether the within-person direction changes across adulthood and whether the complexity of that direction itself varies with age. These questions are operator-valued because they concern coordinated changes in the covariance system as a whole. That perspective is biologically relevant because loss of network segregation, redistribution of covariance strength, and reorientation of dominant functional modes are inherently multivariate phenomena. It is statistically relevant because treating thousands of pairwise connections as unrelated responses discards both the geometry and the shared structure of the covariance object.

\section{Longitudinal covariance model}
\label{sec:model}

We seek to describe how the covariance of brain activity changes as an individual grows older. This requires two distinctions. First, an estimated covariance from a BOLD run is an uncertain measurement, not the underlying covariance state. Second, differences between people of different ages do not necessarily describe changes within the same person. Our approach addresses both questions in one model. We first calibrate the information in each run, then model longitudinal and between-person age effects separately. A central contribution is that the directions of longitudinal covariance change are learned from the repeated visits rather than selected from the dominant patterns of overall covariance variation.

\subsection{Accounting for information in each run}

Let $i$ index participants, $j$ visits, and $r$ runs. The $T_{ijr}$ retained BOLD measurements from $p$ brain regions produce a sample covariance matrix $\bm S_{ijr}$. We write $\bm\Sigma_{ij}$ for the underlying covariance at that visit. Because successive BOLD measurements are correlated, $T_{ijr}$ overstates the number of independent observations. We use the temporal correlation matrix $\bm R_{ijr}$, evaluated at the original retained frame locations, to calculate the effective covariance information $\nu_{ijr}=T_{ijr}^{2}/\tr(\bm R_{ijr}^{2})$. This accounts for temporal dependence and the gaps created by censoring.

Repeated runs at a visit provide a second source of information about measurement uncertainty. We estimate their additional covariance dispersion using a common factor $\phi$ and reduce the run information to $\nu_{ijr}^{\ast}=\nu_{ijr}/\phi$. In DLBS, $\widehat\phi=3.5805$, giving an average of 17.45 effective covariance observations per run after both adjustments. We then weight each run's covariance contribution according to this calibrated information. The resulting covariance quasi-likelihood has the familiar parameter-dependent form of a Wishart likelihood, but it does not require an exact Wishart sampling distribution. In particular, the effective information can be smaller than $p$. 

This calibration is part of the Bayesian analysis, not a correction applied after fitting. Runs supported by less independent information contribute less precision to the posterior, while repeated-run variability prevents the model from treating noisy covariance estimates as exact observations.

\subsection{Modeling change within and between people}

We model a run covariance through a positive-definite reference matrix and a symmetric covariance-change operator. The full model is
\begin{equation}
\begin{aligned}
\bm\Sigma_{ijr}&=\bm\Sigma_{\star}^{1/2}\bm D_{ij}^{1/2}
\exp\{\bm A_{ijr}\}\bm D_{ij}^{1/2}\bm\Sigma_{\star}^{1/2},\\
\bm A_{ijr}&=w_{ij}\bm B_W(\bar a_i)+(\bar a_i-a_0)\bm B_B\\
&\quad+m_{ijr}\bm B_M+s_i\bm B_S+\bm U_i+\bm V_{ij}.
\end{aligned}
\label{eq:tangentdecomp}
\end{equation}
Here $\bm\Sigma_{\star}$ is a fixed, positive-definite reference covariance and $\bm D_{ij}$ is a positive diagonal matrix that allows marginal covariance amplitudes to vary by participant and visit. The symmetric matrix $\bm A_{ijr}$ describes coordinated changes in covariance structure. The matrix exponential guarantees that the resulting run covariance $\bm\Sigma_{ijr}$ is positive definite.

The age terms separate longitudinal change from population differences. We write $a_{ij}$ for the participant's age at visit $j$, $\bar a_i$ for their mean observed age, and $w_{ij}=a_{ij}-\bar a_i$ for their within-person age displacement. The reference age $a_0$ centers the between-person effect. The age-varying symmetric operator $\bm B_W(a)$ describes how covariance changes within a person at age $a$. The symmetric operator $\bm B_B$ describes how covariance differs across people with different mean ages. Estimating these operators separately is essential because the cross-sectional gradient may differ substantially from the longitudinal trajectory.

The remaining terms account for variation that should not be attributed to aging. The covariate $m_{ijr}$ is standardized run-level mean framewise displacement, with coefficient operator $\bm B_M$. The centered sex covariate $s_i$ has coefficient operator $\bm B_S$. The zero-centered symmetric random operators $\bm U_i$ and $\bm V_{ij}$ represent persistent participant differences and visit-specific variation. Their regularization is separate from that of the age effects. In this way, the model allows rich covariance heterogeneity without confusing it with systematic longitudinal change.

The within-person operator $\bm B_W(a)$ varies smoothly across adulthood. We express it using smooth age-basis weights and symmetric coefficient matrices in a learned span, so its strength and dominant directions can evolve with age. The between-person operator is learned in a separate span. The spans may overlap, since their effects are distinguished by the age variables and the repeated-measures structure rather than by requiring spatial orthogonality.

\subsection{Learning the structure of aging}

Although a covariance matrix has many entries, reproducible aging may involve only a small number of coordinated directions. We allow a flexible set of candidate aging directions and place cumulative ordered-shrinkage priors on their coefficients. Later directions receive progressively stronger shrinkage, but none is discarded in advance. The posterior therefore learns the effective complexity of the aging signal while retaining broader participant and visit variation. This differs from selecting principal components based on pooled covariance, which can emphasize between-person differences rather than longitudinal aging.

The low-dimensional representation also makes fitting practical. Matrix exponentials are evaluated exactly within the combined span of the systematic and nuisance components. Technical details on run information and the low-dimensional exponential identity are provided in the Supplementary Materials. The important point is that this reduction changes neither the covariance model nor the meaning of the aging operators. Section~\ref{sec:method} now turns those operators into quantities that describe the magnitude and organization of brain aging.

\section{Quantifying longitudinal covariance aging}
\label{sec:method}

The model estimates covariance change through symmetric operators, but those operators are expressed relative to a reference covariance. To interpret aging biologically, we translate the fitted trajectory back to the covariance scale. This provides a direct comparison between change within a person and age-related differences across people, along with a measure of how rapidly the covariance state is reorganized.

At each age $a$, let $\bm D_W(a)$ denote the derivative of the fitted covariance with respect to within-person age, and let $\bm D_B(a)$ denote the corresponding derivative for between-person age. Both are symmetric matrices on the original covariance scale, obtained using the exact derivative of the matrix exponential. We compare their directions through
\begin{equation}
\rho_{WB}(a)=\frac{\langle\bm D_W(a),\bm D_B(a)\rangle_F}
{\norm{\bm D_W(a)}_F\norm{\bm D_B(a)}_F}.
\label{eq:alignment}
\end{equation}
Here $\langle\cdot,\cdot\rangle_F$ is the Frobenius inner product and $\norm{\cdot}_F$ is its associated matrix norm. Values near one indicate similar covariance-change patterns, values near zero indicate weak alignment, and negative values indicate disagreement in their overall directions. This comparison concerns the complete covariance pattern, not a uniform sign change in every connection.

We also measure the overall rate of longitudinal reorganization by $R_W(a)=\norm{\bm D_W(a)}_F$. Its variation with age reveals when the covariance state changes more quickly. To distinguish changes in the strength of existing covariance patterns from changes in their spatial organization, we examine the fitted spectral structure. We summarize the number of directions carrying meaningful longitudinal change using the effective participation rank $r_{\mathrm{eff}}(a)$. A small value indicates concentrated change, while a larger value indicates that change is spread across more directions. These measures are evaluated for posterior draws, so their reported uncertainty includes uncertainty about the underlying covariance trajectory.

\subsection{Locating changes across functional networks}

Whole-brain measures tell us whether the two age effects agree, but not where their differences arise. We therefore use the seven-network Schaefer partition \citep{schaefer2018} to group covariance changes by functional system. For each network pair, we estimate the share of longitudinal covariance-change energy and assess the alignment of longitudinal and between-person changes within that block. This distinguishes changes within networks from changes between networks and identifies the systems contributing most to global disagreement.

The network summaries are calculated from the same posterior covariance derivatives used for the whole-brain analysis. They are therefore descriptive decompositions of one fitted model, rather than separate network models or thousands of connection-level tests. The derivative and network summaries are defined using the fitted covariance trajectory.

\section{Bayesian inference and theoretical properties}
\label{sec:inference}

Bayesian inference is particularly useful here because the covariance state at each visit is uncertain and the direction of longitudinal aging is learned rather than specified in advance. We combine the calibrated run-level quasi-likelihood with priors on the age effects and nuisance components. Smooth age-basis coefficients allow the longitudinal operator to evolve gradually, while ordered shrinkage favors a small number of reproducible aging directions. Separate centered priors on participant and visit effects accommodate variation that is not part of the systematic aging trajectory.

The result is a generalized Bayesian posterior over covariance trajectories and their age-dependent operators. Posterior draws are propagated through the covariance-scale calculations in Section~\ref{sec:method} to obtain estimates and credible intervals for alignment, rate, effective rank, and network summaries. Because the updating criterion is a calibrated quasi-likelihood rather than an exact sampling likelihood, these are generalized Bayesian credible intervals. Their frequentist coverage is assessed separately in the simulation study. This distinction is important: calibration of run-level covariance information is intended to improve uncertainty quantification, but it does not by itself guarantee nominal coverage for nonlinear summaries.

Our main structural result is that the model always produces valid covariance matrices. This follows directly from the matrix exponential representation.

\begin{theorem}[Positive-definite covariance trajectory]
Let $\bm\Sigma_{\star}$ be positive definite, $\bm D$ a positive diagonal matrix, and $\bm A$ symmetric. Then
\[
\bm\Sigma=\bm\Sigma_{\star}^{1/2}\bm D^{1/2}
\exp(\bm A)\bm D^{1/2}\bm\Sigma_{\star}^{1/2}
\]
is positive definite. If the model components vary smoothly with age, the resulting covariance trajectory is smooth.
\end{theorem}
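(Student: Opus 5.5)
The plan is to write $\bm\Sigma$ as a congruence $\bm\Sigma=\bm M\exp(\bm A)\bm M^{\top}$ with $\bm M=\bm\Sigma_{\star}^{1/2}\bm D^{1/2}$, and then use the fact that congruence by an invertible matrix preserves positive definiteness. First, since $\bm A$ is symmetric, the spectral theorem gives $\bm A=\bm Q\bm\Lambda\bm Q^{\top}$ with $\bm Q$ orthogonal and $\bm\Lambda$ real diagonal. Hence $\exp(\bm A)=\bm Q\exp(\bm\Lambda)\bm Q^{\top}$ is symmetric with eigenvalues $e^{\lambda_k}>0$, so it is positive definite. Next, $\bm\Sigma_{\star}^{1/2}$ is the unique symmetric positive-definite square root and $\bm D^{1/2}$ is diagonal with positive entries. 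Both are symmetric and invertible, so $\bm M$ is invertible and $\bm M^{\top}=\bm D^{1/2}\bm\Sigma_{\star}^{1/2}$.

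The one point needing care is that the displayed expression really has the form $\bm M\exp(\bm A)\bm M^{\top}$. The right factor is $\bm D^{1/2}\bm\Sigma_{\star}^{1/2}$, which is exactly $\bm M^{\top}$, so no commutativity of $\bm\Sigma_{\star}$ and $\bm D$ is required. It follows that $\bm\Sigma^{\top}=\bm\Sigma$. For $\bm x\neq 0$, set $\bm y=\bm M^{\top}\bm x\neq 0$; then
\[
\bm x^{\top}\bm\Sigma\bm x=\bm y^{\top}\exp(\bm A)\bm y>0,
\]
which proves positive definiteness.

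For smoothness, suppose $\bm\Sigma_{\star}$ is fixed, while $a\mapsto\bm D(a)$ and $a\mapsto\bm A(a)$ are $C^k$. Here $\bm A(a)$ is built from the smooth age-basis weights in $\bm B_W(a)$ and the linear terms in (\ref{eq:tangentdecomp}). The map $\bm A\mapsto\exp(\bm A)$ is real-analytic on matrix space, because it is a power series that converges absolutely and uniformly on bounded sets. Its Fréchet derivative is
\[
L_{\exp}(\bm A)[\bm H]=\int_0^1 e^{s\bm A}\bm H e^{(1-s)\bm A}\,ds,
\]
so $a\mapsto\exp\{\bm A(a)\}$ is $C^k$ by the chain rule. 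The map $d\mapsto d^{1/2}$ is smooth on $(0,\infty)$, so $\bm D(a)^{1/2}$ is $C^k$ entrywise. Since products of $C^k$ matrix functions are $C^k$, the trajectory $\bm\Sigma(a)$ is $C^k$, and it is positive definite at every $a$ by the first part.

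I expect the main obstacle to be only expository: making precise what ``smoothly'' means. I would state it as $C^k$ dependence of $\bm D$ and $\bm A$ on age, with $\bm D$ staying in the open set of positive diagonal matrices. One could also allow $\bm\Sigma_{\star}$ to vary smoothly within the positive-definite cone. In that case one would note that the principal square root is smooth there, which follows from the holomorphic functional calculus or from the implicit function theorem applied to $\bm X^2=\bm\Sigma_{\star}$: the linearization $\bm H\mapsto\bm X\bm H+\bm H\bm X$ is invertible when $\bm X$ is positive definite.
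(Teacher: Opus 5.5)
Your proposal is correct and follows the paper's proof essentially step for step. The spectral decomposition shows $\exp(\bm A)$ is positive definite; the congruence $\bm x^{\top}\bm\Sigma\bm x=(\bm C^{\top}\bm x)^{\top}\exp(\bm A)(\bm C^{\top}\bm x)$ with invertible $\bm C=\bm\Sigma_{\star}^{1/2}\bm D^{1/2}$ gives positivity; and smoothness follows from the smoothness of the exponential, square roots, and products. Your explicit check that the right-hand factor is exactly $\bm C^{\top}$ (so $\bm\Sigma$ is symmetric without any commutativity of $\bm\Sigma_{\star}$ and $\bm D$), together with your $C^k$ and implicit-function treatment of smoothness, just makes precise what the paper states briefly.
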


The covariance operators and their scientific summaries are also unchanged by arbitrary changes of basis within a learned span. This invariance concerns the represented covariance trajectory, not a claim that every coordinate-specific prior is rotation invariant. Together, these properties provide a valid covariance trajectory whose longitudinal behavior can be estimated and interpreted with uncertainty.

A simulation study was conducted to evaluate the performance and validity of the proposed method. The results demonstrated accurate recovery of the true within-person covariance-aging direction and showed that the method successfully distinguishes longitudinal aging from between-person age differences. These findings provide strong evidence of the method's ability to recover the underlying covariance-aging structure and support its use for studying longitudinal brain reorganization. The proof, technical justifications, and simulation results are provided in the Supplementary Materials.

\section{DLBS analysis}
\label{sec:dlbs}

The DLBS application returns to the biological question that motivated the model: whether the covariance pattern associated with age across people is the same pattern expressed as people themselves age. The analysis included 50 participants observed at three visits, for 150 visits and 252 resting-state runs spanning ages 22--96 years. All validated run-level time-series matrices contained the same 100 Schaefer cortical parcels. Runs retained 89--180 volumes after censoring and nuisance processing. The analysis used percent signal change rather than z-standardization so that covariance amplitude remained part of the target. The run-level sample covariance matrices were themselves structured, but their dominant pooled directions were not used to define the longitudinal aging span. An overview of the age-specific posterior covariance-aging patterns is shown in Figure~\ref{fig:dlbs_brain}B, with the individual inferential summaries examined below.

Repeated runs were important for separating uncertainty in an observed scan from genuine visit-to-visit covariance change. Temporal autocorrelation reduced the nominal covariance information substantially, and the additional repeated-run calibration gave $\widehat\phi=3.5805$. Under this correction, the mean effective covariance information was 17.45 per run. Earlier posterior covariance checks under the same statistical hierarchy gave a pooled standardized residual standard deviation close to one and approximately 94\% coverage by nominal 95\% covariance-moment intervals, supporting the scale of the calibrated quasi-likelihood before the biological functionals were interpreted.

\subsection{Within-person covariance aging is not the cross-sectional age gradient}

The clearest biological result concerns the distinction between longitudinal aging and population age differences. Figure~\ref{fig:alignment} and Table~\ref{tab:aging-measures} summarize the posterior alignment $\rho_{WB}(a)$ between the within-person covariance derivative and the corresponding between-person age derivative. The posterior mean was $-0.323$ at age 30, $-0.304$ at age 40, $-0.263$ at age 50, and $-0.194$ at age 60. The 95\% credible intervals remained entirely below zero through age 60. At age 70 the posterior mean was $-0.090$ and the interval included zero; by ages 80 and 90 the posterior means were close to zero or mildly positive.

This result indicates that, through early and middle adulthood, the covariance direction followed by an individual as that individual ages is not simply the direction obtained by comparing older and younger adults. The negative alignment is geometric rather than a claim that every connection changes with the opposite sign. It means that the dominant multivariate covariance components of longitudinal aging are partly opposed to those of the cross-sectional age gradient. The discrepancy weakens in later life. In the context of the DLBS literature, where reduced functional-system segregation was established primarily from cross-sectional age contrasts, this finding cautions against treating a population age signature as the trajectory of an individual brain.
{}

\begin{figure}[t]
\centering
\includegraphics[width=.72\textwidth]{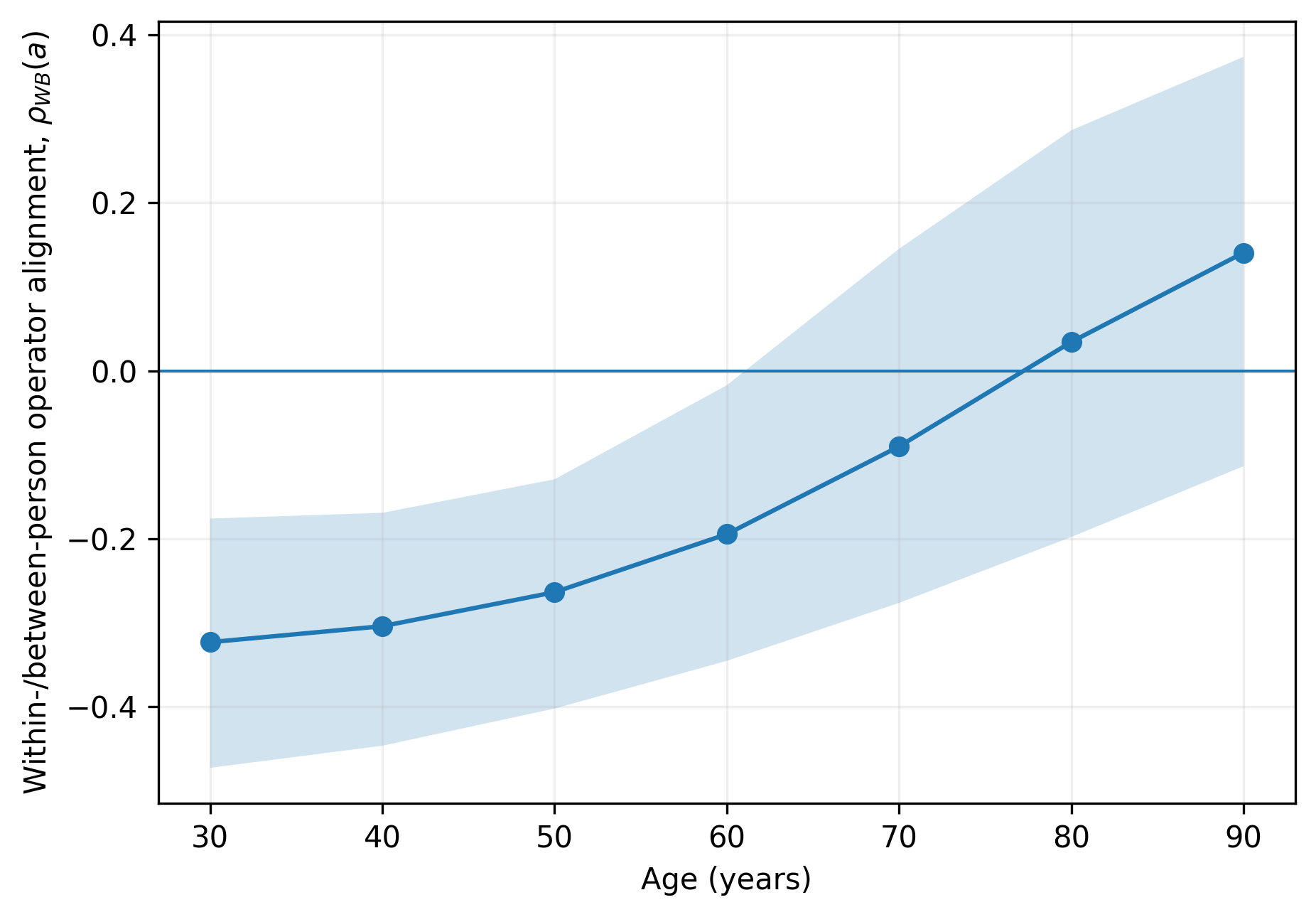}
\caption{Posterior alignment between the within-person covariance-aging derivative and the between-person age derivative. The line is the posterior mean of $\rho_{WB}(a)$ and the band is the pointwise 95\% credible interval. Negative values through early and middle adulthood indicate that the dominant covariance direction of longitudinal aging differs from the cross-sectional age gradient.}
\label{fig:alignment}
\end{figure}

\begin{table}[t]
\centering
\caption{Posterior summaries of the principal measures of longitudinal covariance aging. Values are posterior means with 95\% credible intervals in parentheses.}
\label{tab:aging-measures}
\small
\begin{tabular}{rccc}
\toprule
Age & $R_W(a)$ & $\rho_{WB}(a)$ & $r_{\mathrm{eff}}(a)$ \\
\midrule
30 & 0.023 (0.017, 0.031) & -0.323 (-0.473, -0.176) & 2.85 (2.35, 3.40) \\
40 & 0.025 (0.019, 0.033) & -0.304 (-0.446, -0.169) & 3.06 (2.58, 3.54) \\
50 & 0.027 (0.021, 0.033) & -0.263 (-0.402, -0.129) & 3.03 (2.60, 3.51) \\
60 & 0.027 (0.022, 0.034) & -0.194 (-0.345, -0.017) & 2.97 (2.50, 3.45) \\
70 & 0.025 (0.020, 0.032) & -0.090 (-0.276, 0.145) & 2.89 (2.45, 3.35) \\
80 & 0.023 (0.018, 0.030) & 0.034 (-0.198, 0.286) & 2.69 (2.23, 3.27) \\
90 & 0.022 (0.016, 0.029) & 0.141 (-0.114, 0.374) & 2.43 (1.89, 3.07) \\
\bottomrule
\end{tabular}
\end{table}

\subsection{The rate of covariance reorganization is greatest near midlife}

The global reorganization rate $R_W(a)$ showed a smooth, nonmonotone trajectory (Figure~\ref{fig:rate}). Its posterior mean increased from 0.023 at age 30 to approximately 0.027 at ages 50--60 and then declined to 0.022 by age 90. The result does not support a picture in which whole-brain covariance simply changes progressively faster with advancing age. Instead, the estimated rate is greatest around midlife and attenuates thereafter. Because $R_W(a)$ is a norm, its positive credible interval is not interpreted as a formal test against an atom at zero; the scientific information is in the age-varying magnitude and its uncertainty.

\begin{figure}[t]
\centering
\includegraphics[width=.72\textwidth]{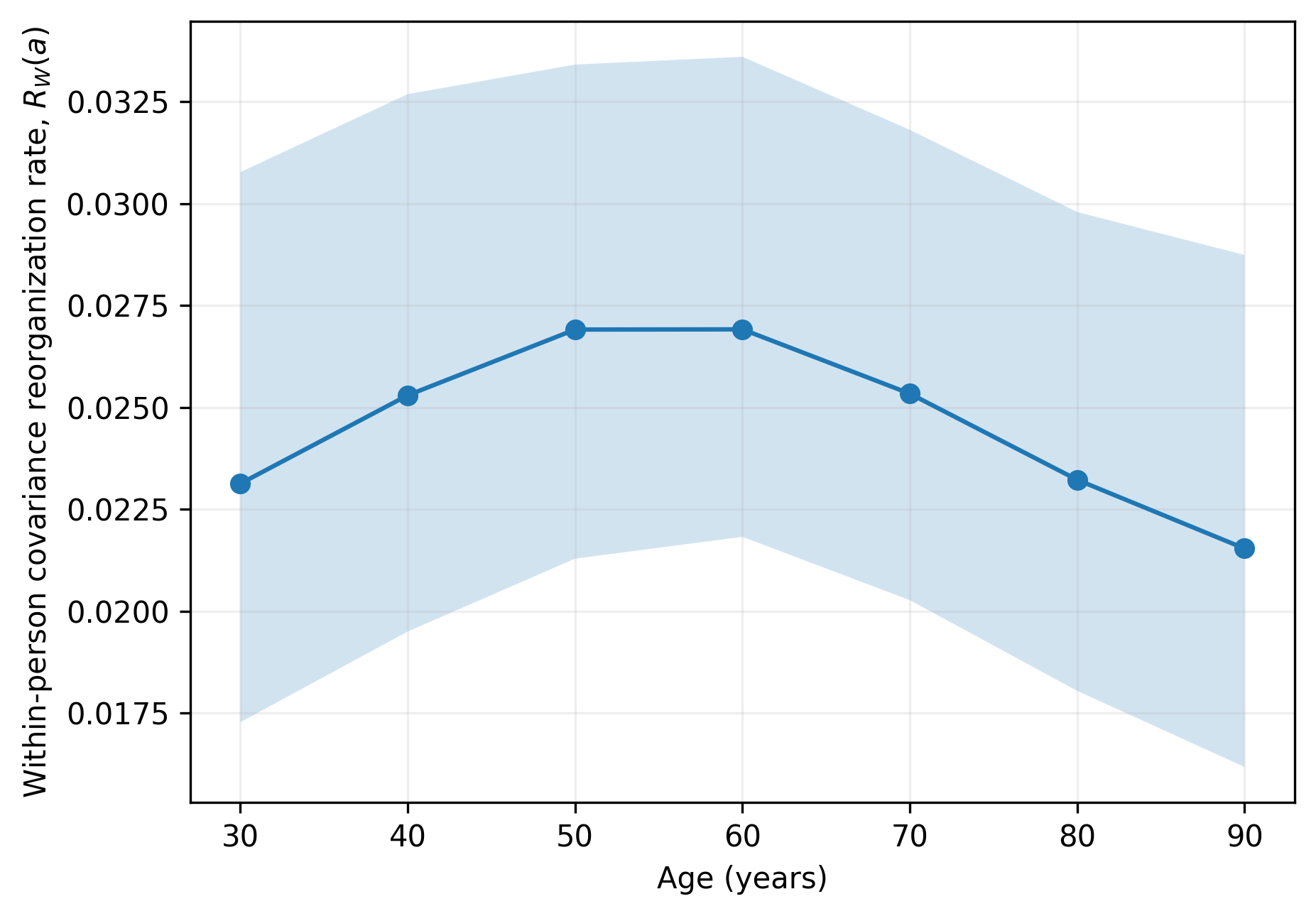}
\caption{Posterior rate of within-person covariance reorganization, $R_W(a)=\|D_W(a)\|_F$. The estimated rate rises modestly into midlife and then declines toward later adulthood. Bands are pointwise 95\% credible intervals under the primary low-rank variational approximation.}
\label{fig:rate}
\end{figure}

\subsection{Longitudinal aging is predominantly between systems and the discrepancy is visual-centered}

The operator-level disagreement has a specific biological organization rather than being diffusely distributed across the connectome. Across the entire age range, 74.1--76.6\% of the off-diagonal energy in $\bm D_W(a)$ was carried by connections linking different functional systems (Table~\ref{tab:network-localization}). Thus, the longitudinal aging signal is primarily a reorganization of coordination \emph{among} large-scale systems, not simply weakening or strengthening within isolated networks. This proportion was strikingly stable across adulthood even though the direction of longitudinal aging relative to the cross-sectional gradient changed substantially.

The early- and mid-adult longitudinal--cross-sectional discrepancy was concentrated around the visual system (Figure~\ref{fig:brain-network-localization}). At age 30 the global off-diagonal cosine between $\bm D_W(a)$ and $\bm D_B(a)$ was $-0.317$ (95\% interval $[-0.470,-0.163]$). The strongest negative network-pair contributions involved Visual--Visual, Visual--Default, Visual--Control, Visual--Limbic, and Visual--Dorsal Attention relationships. The corresponding block cosines for Visual--Visual, Visual--Default, and Visual--Limbic were $-0.641$, $-0.503$, and $-0.732$. The same visual-centered ordering remained apparent at ages 40--60, while the global discrepancy gradually weakened.
{}

A particularly informative result appears in later life. By ages 80--90 the global off-diagonal cosine was near zero or mildly positive, indicating that the whole covariance-age directions were no longer clearly discordant. That global summary, however, masks a persistent network-specific difference. Visual--Limbic alignment remained strongly negative, with posterior mean block cosines of $-0.671$ at age 80 and $-0.611$ at age 90; the posterior probability of a negative Visual--Limbic cosine at age 90 was 0.993. Limbic--Default alignment also remained negative in later adulthood. Figure~\ref{fig:visual-persistence} shows that the visual--limbic discrepancy persists even as the global longitudinal--cross-sectional disagreement attenuates.

This pattern adds biological content that cannot be obtained from the whole-operator norm alone. The same low-dimensional longitudinal aging process is expressed mainly through between-system covariance, and the relationship between individual aging and population age differences changes over the lifespan in a network-specific manner. The result does not imply that cross-sectional reductions in system segregation are incorrect. It shows that the covariance trajectory followed within individuals is a different object, with a visual-centered discrepancy in earlier adulthood and selective persistence of visual--limbic discordance after the global discrepancy has largely resolved.

\begin{figure}[t]
\centering
\includegraphics[width=.96\textwidth]{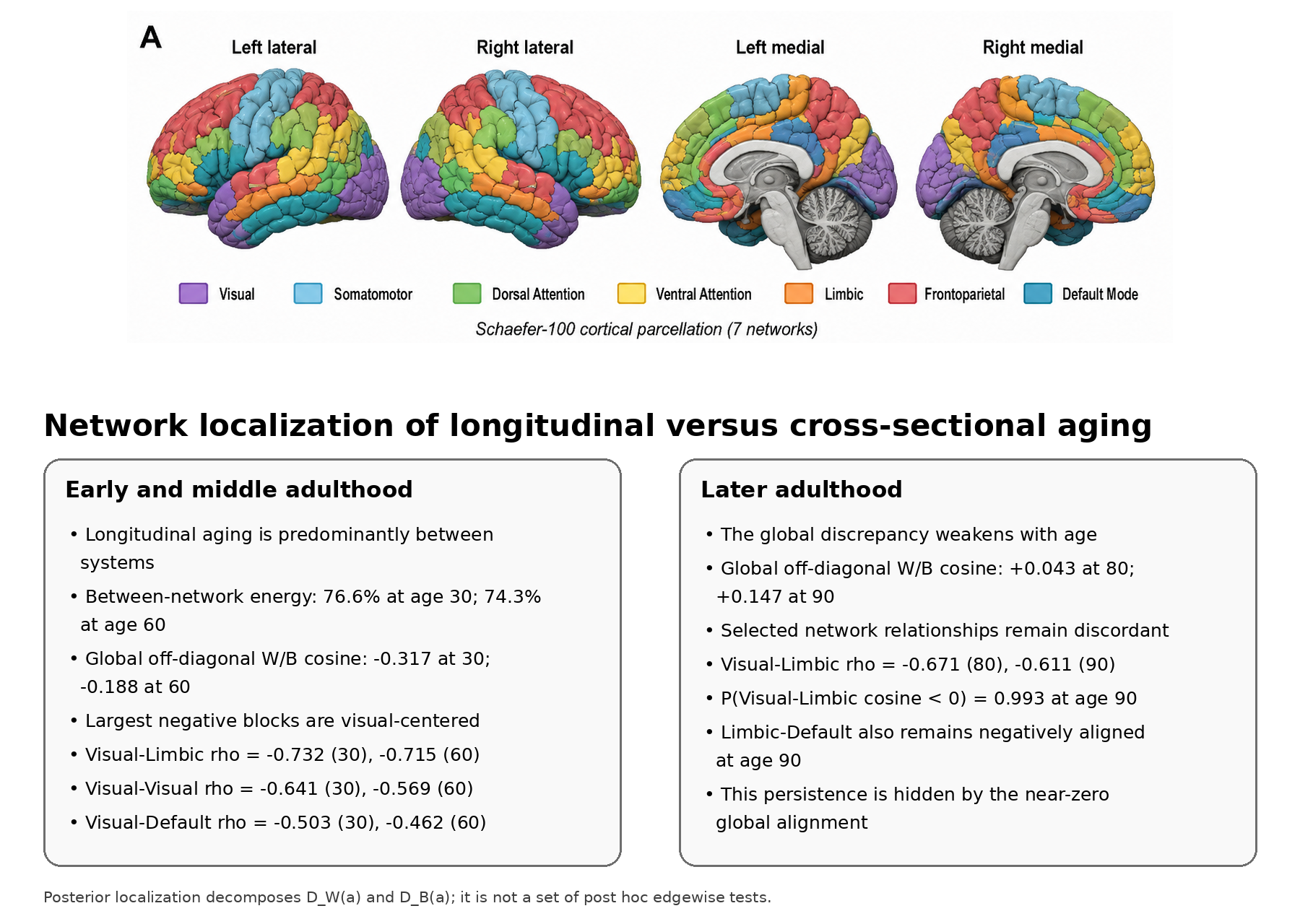}
\caption{Brain-network localization of the longitudinal--cross-sectional covariance-aging discrepancy. The Schaefer-100 seven-network cortical organization provides the anatomical reference. Posterior decomposition of the off-diagonal covariance-scale derivatives shows that roughly three quarters of longitudinal aging energy is between systems. The strongest early- and mid-adult disagreement is visual-centered; in later life the global discrepancy weakens, while Visual--Limbic discordance remains strongly negative. The network results are posterior decompositions of the primary operators, not selected edgewise tests.}
\label{fig:brain-network-localization}
\end{figure}

\begin{figure}[t]
\centering
\includegraphics[width=.94\textwidth]{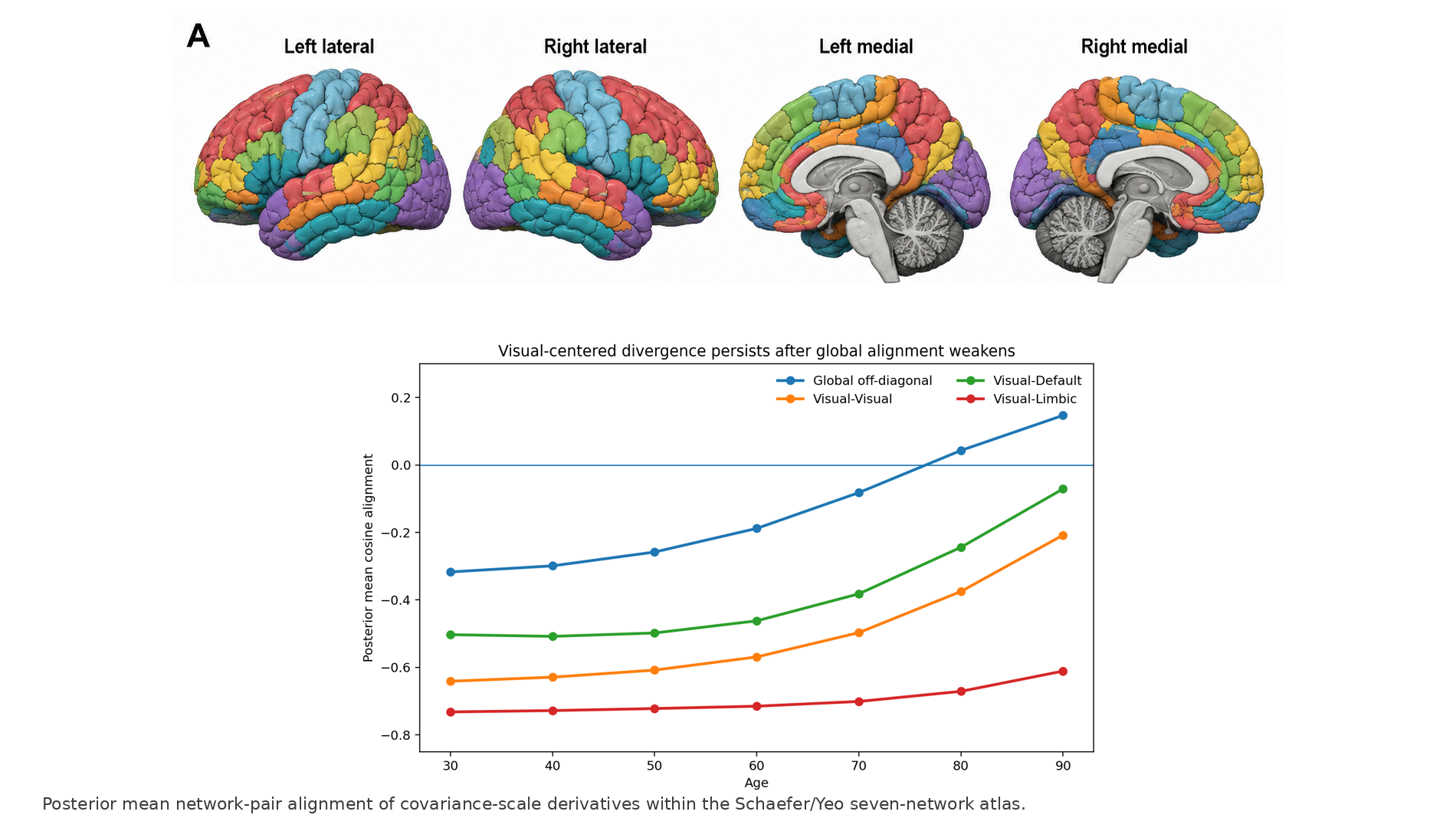}
\caption{Age evolution of the network relationships that most clearly distinguish longitudinal from cross-sectional covariance aging. The global off-diagonal alignment moves from negative in early adulthood toward zero and mildly positive values in later life. Visual--Visual and Visual--Default disagreement attenuate in parallel, whereas Visual--Limbic alignment remains strongly negative across the observed lifespan. Lines are posterior means from the network-resolved decomposition.}
\label{fig:visual-persistence}
\end{figure}

\begin{table}[t]
\centering
\caption{Network-resolved localization of the longitudinal--cross-sectional covariance-aging discrepancy. The global cosine and block cosines are posterior means computed from off-diagonal covariance-scale derivatives. Between-network energy is the percentage of within-person derivative energy carried by connections linking different Schaefer/Yeo systems.}
\label{tab:network-localization}
\small
\begin{tabular}{rrrrrr}
\toprule
Age & Global cosine & Between-network energy & Visual--Visual & Visual--Default & Visual--Limbic\\
\midrule
30 & -0.317 & 76.6\% & -0.641 & -0.503 & -0.732\\
40 & -0.299 & 75.7\% & -0.629 & -0.508 & -0.728\\
50 & -0.258 & 74.9\% & -0.608 & -0.498 & -0.722\\
60 & -0.188 & 74.3\% & -0.569 & -0.462 & -0.715\\
70 & -0.082 & 74.1\% & -0.497 & -0.382 & -0.701\\
80 &  0.043 & 74.5\% & -0.375 & -0.244 & -0.671\\
90 &  0.147 & 75.3\% & -0.208 & -0.071 & -0.611\\
\bottomrule
\end{tabular}
\end{table}

\subsection{Systematic aging is low dimensional but its direction evolves smoothly}

The systematic longitudinal aging signal required only a small number of coordinated covariance directions. The posterior effective rank was approximately 2.4--3.1 across the age range (Figure~\ref{fig:rank}), even though covariance variation across individual scans was considerably richer. Ordered shrinkage provided an independent check on this conclusion: the first four candidate within-person components remained strongly active, the fifth was substantially reduced, and the posterior shrinkage scale declined to 0.002 by component 20 (Figure~\ref{fig:shrinkage}). The low-dimensional conclusion therefore concerns the reproducible population aging signal, not the dimensionality of individual covariance heterogeneity.

\begin{figure}[t]
\centering
\includegraphics[width=.72\textwidth]{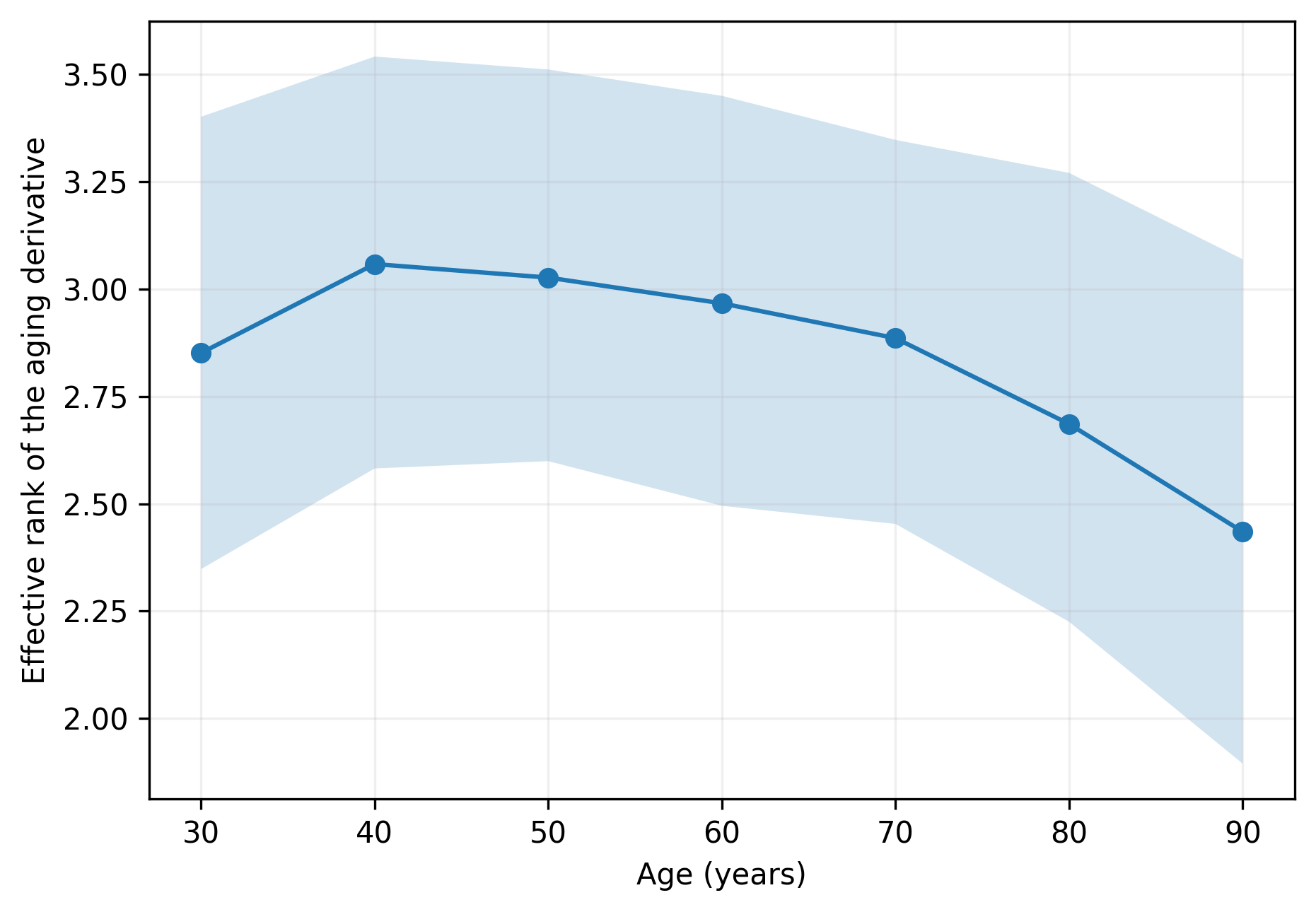}
\caption{Posterior effective rank of the systematic within-person covariance-aging derivative. The aging signal remains concentrated in roughly three effective directions across adulthood.}
\label{fig:rank}
\end{figure}

\begin{figure}[t]
\centering
\includegraphics[width=.72\textwidth]{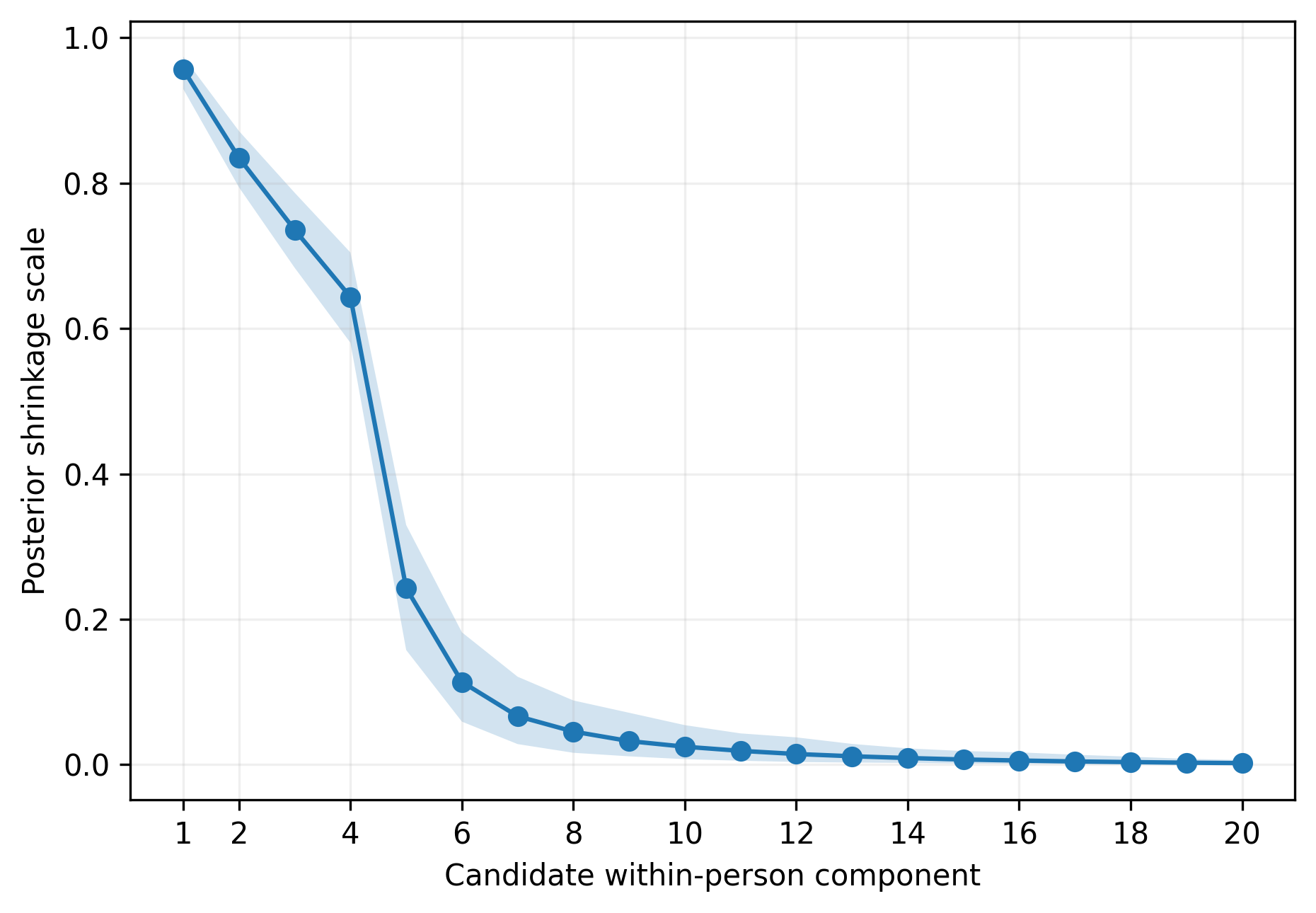}
\caption{Posterior ordered-shrinkage profile for the 20 candidate within-person aging directions. The first four directions dominate, while later directions are progressively suppressed and the truncation boundary is essentially inactive.}
\label{fig:shrinkage}
\end{figure}

Low dimensionality does not imply that one fixed aging operator describes the entire adult lifespan. Nearby ages had extremely similar longitudinal operators: posterior mean cosine similarity was 0.989 for ages 30 and 40, 0.994 for ages 50 and 60, and 0.991 for ages 70 and 80. Similarity declined gradually with wider age separation, reaching 0.847 for ages 30 and 90 (Figure~\ref{fig:crossage}; Table~\ref{tab:crossage}). The pattern is therefore one of local stability with gradual reorientation rather than abrupt rotation. The form of covariance aging evolves over adulthood while remaining smooth from one decade to the next.

\begin{figure}[t]
\centering
\includegraphics[width=.72\textwidth]{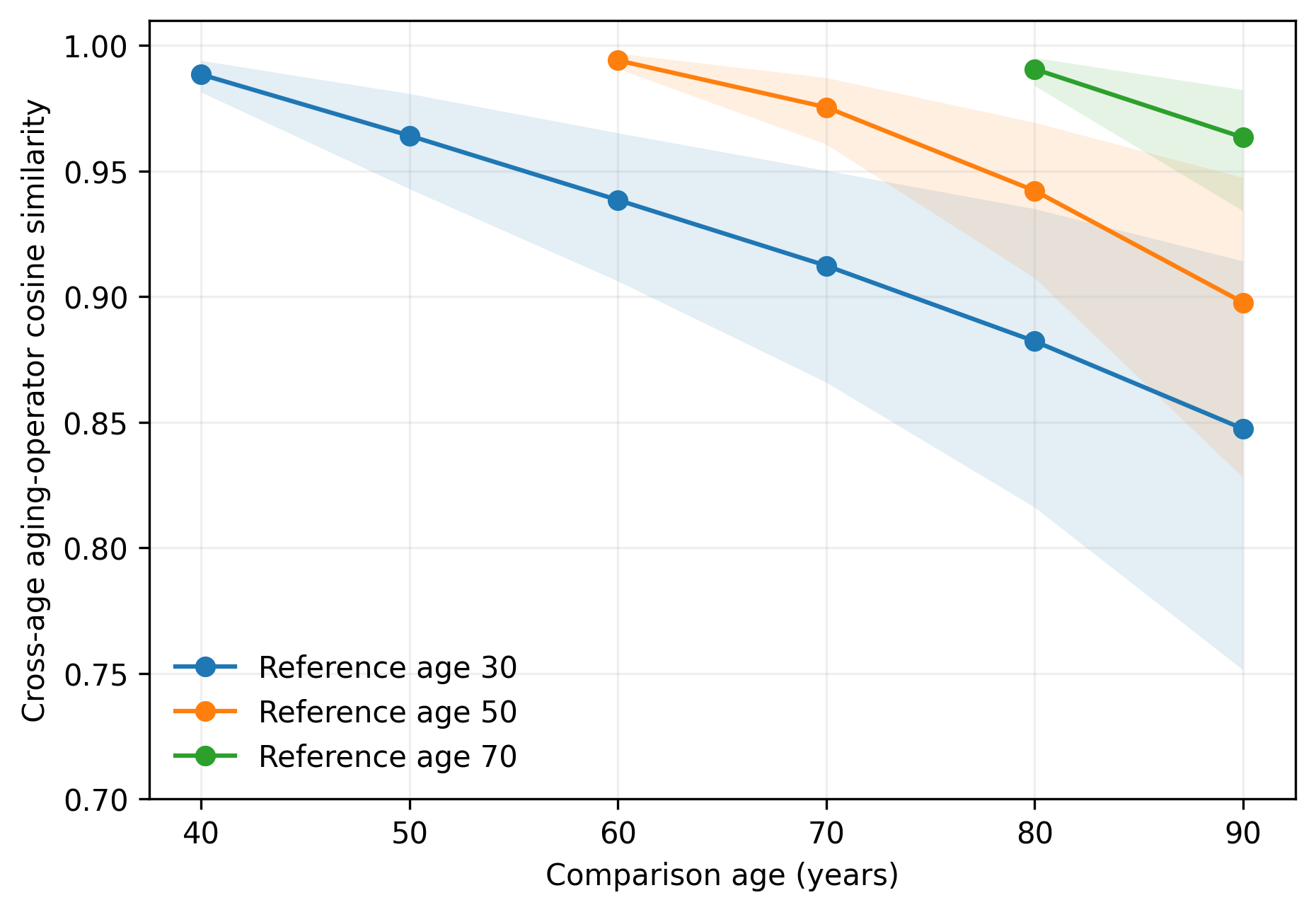}
\caption{Selected posterior cosine similarities between within-person aging operators evaluated at different ages. Operators are highly similar locally but become progressively less aligned over longer age spans, supporting smooth lifespan reorientation.}
\label{fig:crossage}
\end{figure}

\begin{table}[t]
\centering
\caption{Selected posterior cross-age similarities of the within-person aging operator. Values are posterior mean cosine similarities with 95\% credible intervals in parentheses.}
\label{tab:crossage}
\small
\begin{tabular}{rrc}
\toprule
Reference age & Comparison age & Cosine similarity \\
\midrule
30 & 40 & 0.989 (0.981, 0.994) \\
30 & 50 & 0.964 (0.943, 0.981) \\
30 & 60 & 0.939 (0.906, 0.965) \\
30 & 70 & 0.912 (0.866, 0.950) \\
30 & 80 & 0.882 (0.816, 0.935) \\
30 & 90 & 0.847 (0.751, 0.914) \\
50 & 60 & 0.994 (0.991, 0.997) \\
50 & 70 & 0.975 (0.961, 0.987) \\
50 & 80 & 0.942 (0.907, 0.969) \\
50 & 90 & 0.898 (0.828, 0.947) \\
70 & 80 & 0.991 (0.984, 0.995) \\
70 & 90 & 0.963 (0.934, 0.982) \\
\bottomrule
\end{tabular}
\end{table}

\subsection{Interpretation and computational sensitivity}

Taken together, the results identify three features of longitudinal functional-covariance aging. The within-person trajectory differs from the cross-sectional age gradient through early and middle adulthood; the overall rate of reorganization is greatest around midlife rather than at the oldest ages; and the systematic aging signal is low dimensional while its orientation changes gradually across adulthood. These findings are complementary. A low-dimensional process can still evolve in direction, and a cross-sectional covariance signature can remain biologically informative without being an unbiased surrogate for within-person aging.

The network-resolved analysis strengthens the connection between the methodological and biological contributions. The primary finding is defined at whole-operator scale, where the model separates within-person from between-person covariance aging. The secondary decomposition then shows why that operator distinction matters biologically: approximately three quarters of longitudinal aging energy is carried by between-system covariance, the early- and mid-adult discrepancy is visual-centered, and Visual--Limbic discordance persists after the global alignment has approached zero. None of these conclusions requires selecting individual edges or refitting network-specific models.

{}

\FloatBarrier

\section{Discussion}
\label{sec:discussion}

The central finding is that the covariance organization associated with age across people is not, especially in early and middle adulthood, the same as the covariance trajectory followed within individuals. This distinction is more than a technical separation of two regression effects. It changes the biological interpretation of lifespan connectivity. Population age signatures can be informative markers of age while still failing to describe the direction in which an individual's functional organization changes over time. By estimating the two directions separately, the analysis makes their agreement or divergence an empirical question rather than an assumption.

The DLBS results show that this divergence is structured across the lifespan. The within-person and between-person covariance directions are negatively aligned through approximately age 60, after which the discrepancy weakens. At the same time, the magnitude of longitudinal covariance reorganization is greatest around midlife rather than increasing monotonically into old age. These findings suggest that functional brain aging is not well summarized by a single slope or by a fixed network signature. The trajectory changes in both magnitude and orientation, with the most pronounced reorganization occurring at different stages of adulthood rather than simply at the oldest ages.

The low effective dimension of the systematic longitudinal signal adds an important complementary result. Only a few coordinated directions are needed to describe reproducible covariance aging, even though covariance variation among individual scans is much richer. This concentration does not imply a fixed low-dimensional aging pattern. Nearby ages have highly similar operators, but similarity decreases gradually over wider age separations, indicating smooth reorientation across adulthood. The biological picture is therefore one of a parsimonious but evolving process: aging is organized, yet the organization through which it is expressed is not static.

Network localization shows why the whole-operator perspective matters. Roughly three quarters of off-diagonal longitudinal aging energy is carried by covariance between functional systems. The disagreement between longitudinal and cross-sectional aging is especially visual-centered in early and middle adulthood, involving relationships of the visual system with default, control, dorsal-attention, and limbic systems. Much of the global discrepancy attenuates later in life, but visual--limbic covariance remains strongly discordant even at age 90. A global summary alone would miss this persistence. Conversely, an edgewise analysis would fragment a coordinated system-level pattern into many separate tests. The operator decomposition preserves the global biological question while showing where the disagreement is expressed.

The model connects these findings to uncertainty in the finite BOLD runs while allowing the directions of covariance aging to be learned from longitudinal observations.

The results also refine how cross-sectional studies of brain aging should be interpreted. Cross-sectional signatures remain valuable for describing population differences and for identifying age-associated organization. The present analysis does not invalidate those findings. Instead, it shows that their biological meaning should not automatically be extended to within-person change. The distinction is most consequential when the scientific aim is mechanistic---to understand how functional organization changes as a person ages---or when cross-sectional patterns are used to motivate trajectories, intervention windows, or markers of progression.

The proposed framework also has applications beyond functional neuroimaging. Across biomedical research, repeated measurements often capture complex relationships among biological signals that evolve over time. Similar challenges arise in electrophysiology, physiological monitoring, wearable sensor data, and molecular studies, where the goal is to understand how patterns of dependence change within individuals. In these settings, covariance must be estimated from limited observations, and changes observed within an individual may differ from associations found across a population. By accounting for these differences, our approach offers a way to study how biological systems reorganize over time while preserving the underlying dependence structure.




\section*{Acknowledgments}

The authors thank the investigators and participants of the Dallas Lifespan Brain Study for making this longitudinal resource publicly available.

\printbibliography

\clearpage
\begingroup
\onehalfspacing
\setcounter{section}{0}
\setcounter{subsection}{0}
\setcounter{figure}{0}
\setcounter{table}{0}
\setcounter{theorem}{0}
\renewcommand{\thesection}{S\arabic{section}}
\renewcommand{\thesubsection}{\thesection.\arabic{subsection}}
\renewcommand{\thefigure}{S\arabic{figure}}
\renewcommand{\thetable}{S\arabic{table}}
\renewcommand{\thetheorem}{S\arabic{theorem}}
\renewcommand{\theHsection}{supp.\arabic{section}}
\renewcommand{\theHfigure}{supp.\arabic{figure}}
\renewcommand{\theHtable}{supp.\arabic{table}}
\renewcommand{\theHtheorem}{supp.\arabic{theorem}}
\begin{center}
{\Large\bfseries Supplementary Materials for\\[5pt] Longitudinal Brain Aging in Covariance Space}\par
\vspace{12pt}
Montserrat Fuentes and Yi Zhao
\end{center}
\vspace{1em}
\section{Proof of the positive-definiteness theorem}
The main paper models covariance through a matrix exponential. This section establishes why the resulting matrix is a valid covariance matrix at every age. It also explains why smooth changes in the model produce a smooth covariance trajectory. The result provides the mathematical foundation for interpreting within-person covariance change.

\begin{theorem}[Positive-definite covariance trajectory]
Let $\bm\Sigma_{\star}$ be a positive-definite reference covariance matrix, let $\bm D$ be a positive diagonal matrix, and let $\bm A$ be a real symmetric matrix. Then
\[
\bm\Sigma=\bm\Sigma_{\star}^{1/2}\bm D^{1/2}
\exp(\bm A)\bm D^{1/2}\bm\Sigma_{\star}^{1/2}
\]
is positive definite. If the matrices defining this covariance vary smoothly with age, the resulting covariance trajectory is smooth.
\end{theorem}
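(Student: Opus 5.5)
The argument has two parts: an algebraic congruence argument for positive definiteness, and a composition-of-smooth-maps argument for smoothness. Throughout, $\bm\Sigma_{\star}^{1/2}$ and $\bm D^{1/2}$ are the unique symmetric positive-definite square roots. For $\bm\Sigma_{\star}$ this root comes from the spectral decomposition $\bm\Sigma_{\star}=\bm Q\bm\Lambda\bm Q^\top$ with $\bm\Lambda>0$. For $\bm D$ it is the entrywise square root of the positive diagonal. Writing $\bm M=\bm\Sigma_{\star}^{1/2}\bm D^{1/2}$, the matrix $\bm M$ is invertible as a product of invertible matrices. Since both factors are symmetric, $\bm M^\top=\bm D^{1/2}\bm\Sigma_{\star}^{1/2}$, and the covariance reads $\bm\Sigma=\bm M\exp(\bm A)\bm M^\top$.

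First, $\exp(\bm A)$ is symmetric positive definite. Diagonalize $\bm A=\bm P\bm\Gamma\bm P^\top$ with $\bm P$ orthogonal and $\bm\Gamma$ real diagonal. The power series of the exponential gives $\exp(\bm A)=\bm P\exp(\bm\Gamma)\bm P^\top$, whose eigenvalues $e^{\gamma_k}$ are strictly positive.

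Next, congruence preserves positive definiteness. Symmetry of $\bm\Sigma$ is immediate. For $x\neq 0$,
\[
x^\top\bm\Sigma x=(\bm M^\top x)^\top\exp(\bm A)(\bm M^\top x)>0,
\]
because $\bm M^\top x\neq 0$ by invertibility of $\bm M$.

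For smoothness, take age-indexed $\bm\Sigma_{\star}(a)$, $\bm D(a)$ and $\bm A(a)$, each $C^k$ (or $C^\infty$) in $a$. In the model, $\bm A(a)$ is affine in smooth age-basis weights, and $\bm\Sigma_{\star}$ may also be held fixed. The trajectory $a\mapsto\bm\Sigma(a)$ is then a composition of smooth maps, and I would establish each link as follows.
\begin{itemize}
\item The map $\exp$ is real-analytic on all of $\mathbb R^{p\times p}$, since its power series converges locally uniformly together with all termwise derivatives. Its derivative is the Fréchet derivative $L_{\exp}(\bm A,\bm E)=\int_0^1 e^{s\bm A}\bm E e^{(1-s)\bm A}\,ds$ that the paper uses for $\bm D_W$ and $\bm D_B$.
\item The square root of a positive diagonal matrix is smooth entrywise.
\item Matrix multiplication is polynomial, hence smooth.
\end{itemize}
The chain rule then yields $C^k$ smoothness of $\bm\Sigma(a)$.

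The main obstacle is smoothness of the principal square root $\bm S\mapsto\bm S^{1/2}$ on the open cone of positive-definite matrices, which is needed if $\bm\Sigma_{\star}$ varies. A spectral argument is unsuitable because eigenvectors are not smooth at repeated eigenvalues. Instead, I would apply the inverse function theorem to the map $F(\bm X)=\bm X^2$ on symmetric positive-definite $\bm X$. Its differential is $\bm H\mapsto\bm X\bm H+\bm H\bm X$. This is a Sylvester operator with eigenvalues $\lambda_i+\lambda_j>0$, so it is invertible. Hence the inverse map $\bm S\mapsto\bm S^{1/2}$ is smooth, and in fact analytic. If $\bm\Sigma_{\star}$ is fixed, as in the main model, this step is unnecessary.
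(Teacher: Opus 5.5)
Your proposal is correct and follows essentially the same route as the paper: an eigendecomposition shows $\exp(\bm A)$ is positive definite, the congruence $\bm\Sigma=\bm M\exp(\bm A)\bm M^\top$ with $\bm M=\bm\Sigma_{\star}^{1/2}\bm D^{1/2}$ invertible gives positive definiteness, and smoothness follows by composing smooth maps. Your extra step, the inverse-function-theorem argument for smoothness of the principal square root via the invertible operator $\bm H\mapsto\bm X\bm H+\bm H\bm X$, supplies a justification the paper only asserts, as does your explicit choice of symmetric square roots, which is needed for $\bm M^\top=\bm D^{1/2}\bm\Sigma_{\star}^{1/2}$.
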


\begin{proof}
Because $\bm A$ is symmetric, it has an orthogonal eigendecomposition $\bm A=\bm V\bm\Lambda\bm V^{\top}$. Its matrix exponential is $\exp(\bm A)=\bm V\operatorname{diag}(e^{\lambda_1},\ldots,e^{\lambda_p})\bm V^{\top}$. Every eigenvalue of this exponential is strictly positive, so $\exp(\bm A)$ is positive definite.

Write $\bm C=\bm\Sigma_{\star}^{1/2}\bm D^{1/2}$. Both factors are invertible. For any nonzero vector $\bm x$, the vector $\bm C^{\top}\bm x$ is also nonzero, and hence
\[
\bm x^{\top}\bm\Sigma\bm x
=(\bm C^{\top}\bm x)^{\top}\exp(\bm A)
 (\bm C^{\top}\bm x)>0.
\]
Thus $\bm\Sigma$ is positive definite. The matrix exponential is a smooth function of its matrix argument. Matrix square roots and products are also smooth on the positive-definite domain. Consequently, a smoothly varying symmetric operator and smoothly varying positive-definite scaling factors define a smooth covariance trajectory.
\end{proof}

The theorem concerns the covariance matrix itself rather than a particular coordinate representation of its low-dimensional operators. Changing the basis of a learned span while transforming its coefficient matrix accordingly leaves the induced operator unchanged. This is a representation identity and does not assert that a coordinate-dependent shrinkage prior is rotation invariant.

\clearpage
\section{Technical definitions and justification}
\label{sec:supptech}

The following details clarify how run-level uncertainty enters the model and how the fitted covariance trajectory is interpreted. They use the notation of the main paper and do not introduce an additional model.

\subsection{Information in a resting-state run}
For run $r$ at visit $j$ of participant $i$, let $T_{ijr}$ be the number of retained BOLD frames and let $\bm R_{ijr}$ denote their temporal correlation matrix, evaluated at the original frame positions after censoring. The effective covariance information used in the analysis is
\[
\nu_{ijr}=\frac{T_{ijr}^{2}}{\operatorname{tr}(\bm R_{ijr}^{2})},
\qquad \nu_{ijr}^{\ast}=\frac{\nu_{ijr}}{\phi}.
\]
The first expression reduces information when temporally correlated frames carry redundant information. The second accounts for additional dispersion observed between runs at the same visit. In particular, the effective information is not interpreted as the number of independent BOLD frames.

For a run sample covariance $\bm S_{ijr}$ and its modeled covariance $\bm\Sigma_{ijr}$, the parameter-dependent part of the working covariance log-likelihood is proportional to
\[
-\frac{1}{2}\sum_{ijr}\nu_{ijr}^{\ast}
\left\{\log|\bm\Sigma_{ijr}|+
\operatorname{tr}(\bm\Sigma_{ijr}^{-1}\bm S_{ijr})\right\}.
\]
This expression is a quasi-likelihood used for generalized Bayesian updating. It does not claim that each sample covariance follows an exact Wishart distribution, particularly when effective information is below the number of parcels. The estimated dispersion factor changes the scale of run contributions and therefore the posterior uncertainty.

\subsection{Covariance-scale derivatives}
Let $\bm A(a)$ denote a differentiable symmetric tangent-state trajectory, and let $\bm C(a)$ collect the positive-definite scaling factors so that $\bm\Sigma(a)=\bm C(a)\exp\{\bm A(a)\}\bm C(a)^\top$. Its derivative is
\[
\begin{aligned}
\frac{d\bm\Sigma(a)}{da}
={}&\bm C'(a)\exp\{\bm A(a)\}\bm C(a)^\top\\
&+\bm C(a)\,\mathcal L_{\exp}\!\left(\bm A(a),\bm A'(a)\right)\bm C(a)^\top\\
&+\bm C(a)\exp\{\bm A(a)\}\bm C'(a)^\top,
\end{aligned}
\]
where the Fr\'echet derivative of the matrix exponential is
\[
\mathcal L_{\exp}(\bm A,\bm E)
=\int_0^1\exp\{(1-t)\bm A\}\,\bm E\,\exp(t\bm A)\,dt.
\]
The longitudinal derivative $\bm D_W(a)$ is obtained by differentiating with respect to the within-person age component. The between-person derivative $\bm D_B(a)$ is defined analogously. The corresponding reorganization rate is $R_W(a)=\|\bm D_W(a)\|_F$, and their alignment is the Frobenius cosine reported in the main paper. These are covariance-scale quantities, not derivatives of individual correlations.

\subsection{Learned directions and network summaries}
When a symmetric operator is represented in an orthonormal span $\bm Q$ as $\bm A=\bm Q\bm H\bm Q^\top$, with $\bm Q^\top\bm Q=\bm I$, its exponential can be evaluated without a full eigendecomposition of the ambient covariance dimension:
\[
\exp(\bm Q\bm H\bm Q^\top)
=\bm I+\bm Q\{\exp(\bm H)-\bm I\}\bm Q^\top.
\]
This identity follows because the operator is zero on the orthogonal complement of the span. It is a computational identity, not a separate assumption about the covariance trajectory. An orthogonal change of coordinates within the span leaves the represented operator unchanged when the coefficient matrix is transformed accordingly. This does not require the coordinate-level prior to be invariant.

For the seven functional systems, let $\mathcal I_g$ denote the parcels assigned to network $g$. The block of the longitudinal derivative indexed by $\mathcal I_g\times\mathcal I_h$ describes covariance change between those systems. Its squared Frobenius norm measures the contribution of that block to the overall change. Summing squared entries with parcel indices in different systems, and dividing by the sum over all off-diagonal entries, gives the between-system energy fraction. Network-pair alignment is the Frobenius cosine of the corresponding blocks of $\bm D_W(a)$ and $\bm D_B(a)$ when both block norms are nonzero. These summaries decompose the same fitted derivatives and do not require fitting separate models.

\clearpage
\section{Simulation study}
\label{sec:suppsim}

The simulation examines whether the model can recover the direction of covariance change experienced within individuals. It also tests what happens when the population age gradient is used in its place. Because each visit is informed by repeated runs, we assess how accounting for variation between runs affects interval coverage. The experiment focuses on these features of the model rather than reproducing every aspect of the DLBS data.

We generated 200 data sets with $p=16$ dimensions, 120 participants, three visits per participant, and two runs per visit. Participant mean ages were distributed across adulthood, with visits separated by approximately four years. The latent tangent state was
\[
\bm A_{ij}
=
w_{ij}\bm B_W(\bar a_i)+
b_i\bm B_B,
\]
where $b_i$ denotes the centered between-person age term, $\bm B_W(a)$ varied smoothly across age through four radial basis coefficients, and $\bm B_B$ represented a distinct between-person age direction. The longitudinal operator had approximately three dominant directions and changed orientation with age. The aging directions partially overlapped the leading eigenvectors of the baseline covariance. This allowed us to examine whether choosing directions from pooled variation would miss some of the aging signal. The latent covariance was generated by the same exponential geometry used in the model,
\[
\bm\Sigma_{ij}
=
\bm L_0\exp(\bm A_{ij})\bm L_0^\top .
\]
Run-level tangent perturbations had variance proportional to $\phi/\nu_{ijr}$, with $\phi=3.5$ and mean run information near 62. The overdispersion factor changes the uncertainty assigned to the runs without changing their relative weights. It therefore affects the intervals but not the point estimates.

For comparison, we constructed intervals that ignore additional run-to-run dispersion by setting $\phi=1$. These intervals share the calibrated point estimates, allowing us to examine uncertainty calibration directly. We also evaluated the between-person age gradient as a substitute for the longitudinal direction. To assess the effect of choosing directions from pooled variation, we calculated how much of the true aging signal was retained by the five leading baseline covariance eigenvectors.

The proposed longitudinal estimator recovered the covariance-aging direction with high accuracy throughout adulthood (Table~\ref{tab:simulation-main}). Mean cosine similarity with the true $\bm D_W(a)$ was 0.9995 at ages 30 and 50, 0.9993 at age 70, and 0.9949 at age 90. RMSE for the global reorganization rate remained below $7.4\times10^{-4}$ at all ages, and RMSE for $\rho_{WB}(a)$ ranged from 0.0033 to 0.0099. In contrast, treating the cross-sectional gradient as the longitudinal direction produced negative operator similarities at every evaluated age, from $-0.540$ at age 30 to $-0.310$ at age 90 (Figure~\ref{fig:sim-operator}). The difference is therefore not simply one of magnitude. The population age gradient can point in a different multivariate direction.

\begin{figure}[t]
\centering
\includegraphics[width=.72\textwidth]{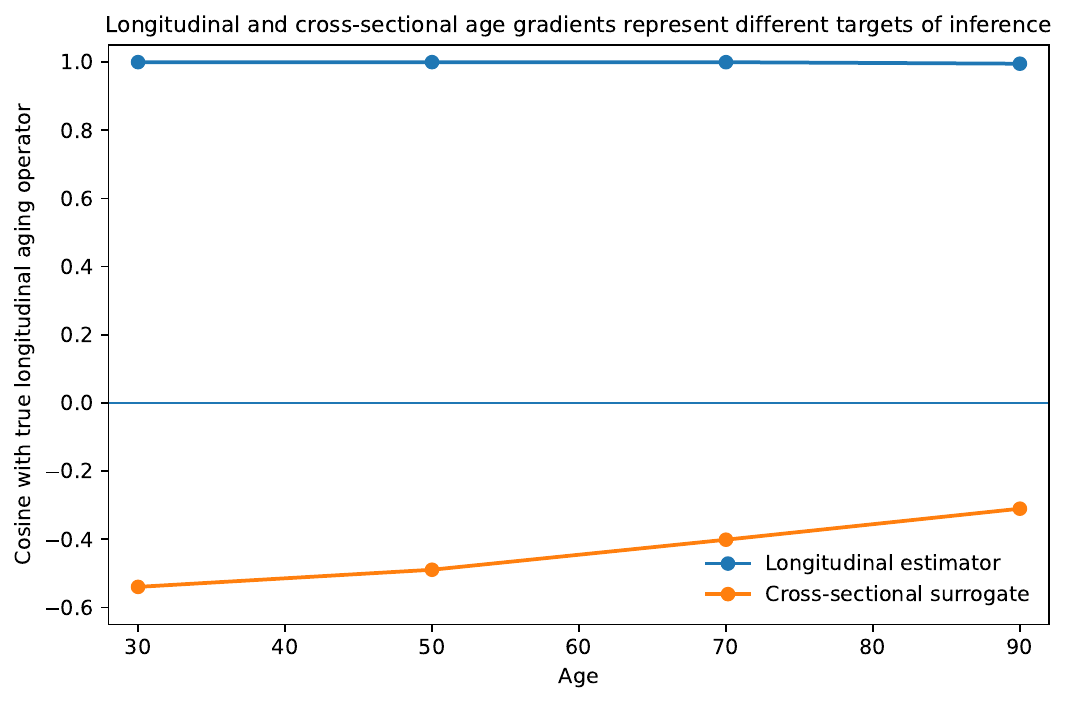}
\caption{Simulation recovery of the longitudinal covariance-aging direction. The proposed longitudinal estimator is nearly collinear with the true within-person derivative across adulthood, whereas using the cross-sectional age gradient as a surrogate yields a negatively aligned direction. Results are averages over 200 Monte Carlo replicates.}
\label{fig:sim-operator}
\end{figure}

Repeated-run calibration primarily affected uncertainty, as expected theoretically. For $\rho_{WB}(a)$, the calibrated 95\% intervals had empirical coverage of 0.970, 0.950, 0.950, and 0.940 at ages 30, 50, 70, and 90. Ignoring $\phi$ reduced coverage to 0.720, 0.735, 0.710, and 0.750 (Figure~\ref{fig:sim-coverage}). Coverage for $R_W(a)$ likewise improved from 0.650--0.700 under $\phi=1$ to 0.905--0.930 through age 70. At age 90 the calibrated coverage for $R_W(a)$ was 0.835, indicating residual finite-sample undercoverage for the positive norm functional when the longitudinal signal is weaker. We therefore use the simulation to support calibrated uncertainty propagation while retaining caution about norm-based interval interpretation at the oldest ages.

\begin{figure}[t]
\centering
\includegraphics[width=.72\textwidth]{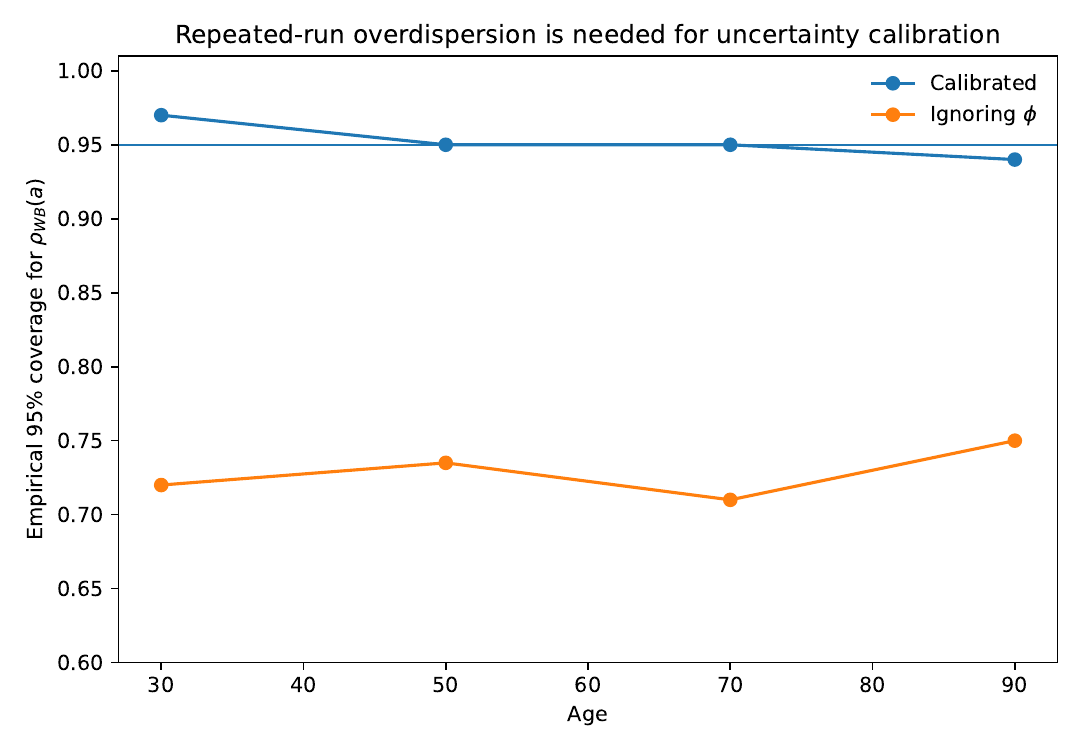}
\caption{Empirical coverage of nominal 95\% intervals for the longitudinal/between-person alignment $\rho_{WB}(a)$. Propagating the repeated-run overdispersion factor produces coverage close to nominal, whereas setting $\phi=1$ yields substantial undercoverage.}
\label{fig:sim-coverage}
\end{figure}

The pooled-variance comparison addresses a different concern. The five leading baseline covariance eigenvectors retained only 47.3\%, 46.4\%, 43.6\%, and 41.1\% of the true longitudinal aging-operator energy at ages 30, 50, 70, and 90 (Figure~\ref{fig:sim-pca}). This does not imply that principal components are generally unsuitable. It shows that the directions accounting for the greatest overall covariance variation can differ from those that capture reproducible within-person aging. This motivates learning the aging-associated tangent subspace directly from the longitudinal structure rather than determining it in advance from pooled variance.

\begin{figure}[t]
\centering
\includegraphics[width=.72\textwidth]{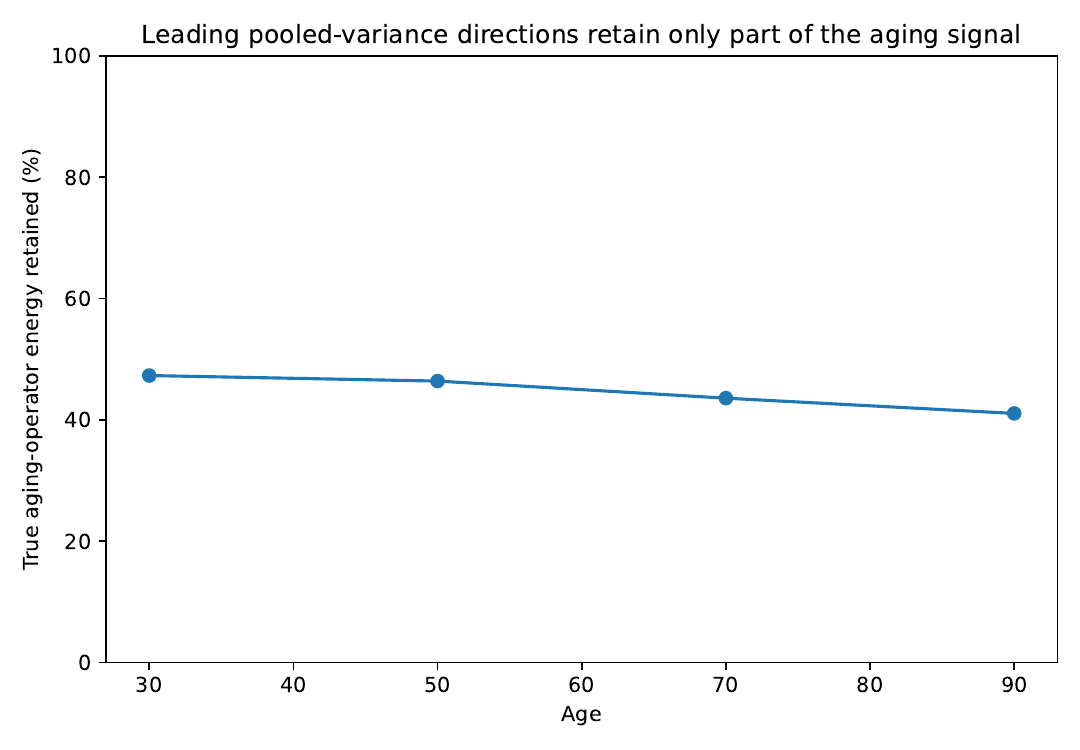}
\caption{Fraction of the true longitudinal aging-operator energy retained by the five leading eigenvectors of the baseline covariance. The aging subspace was constructed to overlap partially, but not completely, with dominant pooled-variance directions. Even in this nondegenerate setting, variance-based reduction discards more than half of the longitudinal signal.}
\label{fig:sim-pca}
\end{figure}

\begin{table}[t]
\centering
\caption{Simulation performance over 200 Monte Carlo replicates. Operator cosine is the cosine similarity between the estimated and true within-person covariance-aging derivatives. Coverage refers to nominal 95\% intervals. The calibrated and $\phi=1$ analyses have the same point estimator because the global overdispersion factor rescales uncertainty rather than the relative run weights.}
\label{tab:simulation-main}
\small
\resizebox{\textwidth}{!}{%
\begin{tabular}{llrrrrr}
\toprule
Method & Age & $R_W$ RMSE & $\rho_{WB}$ RMSE & Operator cosine & $R_W$ coverage & $\rho_{WB}$ coverage\\
\midrule
Calibrated & 30 & 0.00035 & 0.0033 & 0.9995 & 0.920 & 0.970\\
Calibrated & 50 & 0.00026 & 0.0035 & 0.9995 & 0.930 & 0.950\\
Calibrated & 70 & 0.00029 & 0.0042 & 0.9993 & 0.905 & 0.950\\
Calibrated & 90 & 0.00074 & 0.0099 & 0.9949 & 0.835 & 0.940\\
\addlinespace
Ignore $\phi$ & 30 & 0.00035 & 0.0033 & 0.9995 & 0.685 & 0.720\\
Ignore $\phi$ & 50 & 0.00026 & 0.0035 & 0.9995 & 0.700 & 0.735\\
Ignore $\phi$ & 70 & 0.00029 & 0.0042 & 0.9993 & 0.670 & 0.710\\
Ignore $\phi$ & 90 & 0.00074 & 0.0099 & 0.9949 & 0.650 & 0.750\\
\addlinespace
Cross-sectional surrogate & 30 & 0.08415 & 1.5419 & -0.5397 & -- & --\\
Cross-sectional surrogate & 50 & 0.06868 & 1.4916 & -0.4894 & -- & --\\
Cross-sectional surrogate & 70 & 0.05676 & 1.4037 & -0.4013 & -- & --\\
Cross-sectional surrogate & 90 & 0.05586 & 1.3132 & -0.3103 & -- & --\\
\bottomrule
\end{tabular}%
}
\end{table}

The simulation study provides strong support for the proposed methodology by demonstrating its ability to recover longitudinal patterns of covariance change and distinguish them from population-level age differences. The results also highlight the importance of modeling within-person covariance dynamics rather than relying on pooled-variance directions. Together, these findings support the validity of the proposed approach and its application to understanding longitudinal brain reorganization in the DLBS data.



\endgroup

\end{document}